%
\newcounter{myctr}


\documentclass{article}

\usepackage{url}
\usepackage{graphicx}
\usepackage{amsmath}
\usepackage{amsfonts}
\usepackage{url}
\usepackage{fixmath}
\usepackage{bm}
\usepackage{amsmath,amssymb}
\usepackage{times}
\usepackage{algorithmic}
\usepackage{algorithm}
\usepackage{amssymb}
\usepackage{amsmath}
\usepackage{amsfonts}
\usepackage{subfigure}
\usepackage{nccmath}
\setcounter{tocdepth}{3}
\usepackage{graphicx}
\usepackage{url}
\usepackage{multirow}
\usepackage{wrapfig}
\usepackage{caption}
\usepackage{color}
\usepackage{times}
\usepackage{fixmath}
\usepackage{bm}
\usepackage{mathtools}

\usepackage{amsthm}

\newtheorem{definition}{Definition}

\newtheorem{lemma}{Lemma}
\newtheorem{theorem}{Theorem}

\begin{document}



%
%

\title{Effectively Counting  $s-t$ Simple Paths in Directed Graphs
}

\author{ Mostafa Haghir Chehreghani \\
Department of Computer Engineering\\
               Amirkabir University of Technology (Tehran Polytechnic), Tehran, Iran
\\
mostafa.chehreghani@aut.ac.ir}

%

\maketitle


\begin{abstract}
An important tool in analyzing complex social and information networks
is $s-t$ simple path counting, which
is known to be $\#$P-complete.
In this paper, we
study efficient $s-t$ simple path
counting in directed graphs.
For a given pair of vertices $s$ and $t$
in a directed graph, first we propose a pruning technique
that can efficiently and considerably reduce the search space.
Then, we discuss how this technique
can be adjusted with exact and approximate algorithms, to improve their efficiency.
In the end, by performing extensive experiments
over several networks from different domains,
we show high empirical efficiency
of our proposed technique.
Our algorithm is not a competitor of   existing methods,
rather, it is a friend that can be
used as a fast pre-processing step,
before applying any existing algorithm.
\end{abstract}

\paragraph{Keywords}Complex network analysis; directed graphs;
simple paths;
 $s-t$ (simple) path counting; exact algorithm; approximate algorithm.

\section{Introduction}
\label{sec:introduction}
Graphs (networks) are powerful tools
that are used to model data in different
domains,
including social networks, information networks, road networks and the world wide web.
A property seen in most of these real-world networks is that the links between the vertices do not always represent reciprocal relations \cite{Newman03thestructure}.
Consequently, the generated networks are {\em directed graphs},
where any edge has a direction and the edges are not necessarily symmetric.

{\em Path counting} is an important problem in analyzing
large networks.
For example, some of network indices, such as
{\em $k$-path centrality}
are based on counting paths \cite{DBLP:conf/cikm/ChehreghaniBA19}.
A computationally demanding path counting problem is
{\em $s-t$ simple path counting}.
In this problem, given two vertices $s$ and $t$,
the goal is to count the number of
{\em \underline{all simple paths}} from $s$ to $t$.
A \textit{simple} path is path that contains each node
of the graph at most once.
For both directed and undirected graphs,
this problem is known to be
$\#$P-complete \cite{DBLP:journals/siamcomp/Valiant79}\footnote{It is important to not confuse
our studied problem with the
problem wherein
the constraint "simple" is removed,
i.e., we count the number of all paths (simple or non-simple) between two vertices $s$ and $t$.
In this problem, a node may appear for several times in a path. Unlike our studied problem (which is $\#$P-complete),
this problem can be efficiently solved in polynomial time,
using e.g., dynamic programming or matrix multiplication.
Another relevant problem,
that unlike our studied problem
can be solved in polynomial time,
is counting the number of shortest paths between
two vertices.
}.
Hence, it is critical to develop algorithms that work efficiently in practice.

In this paper, we focus on efficient $s-t$ simple path
counting in directed graphs.
Our key contributions are
as follows.
\begin{itemize}
\item
First,
for a given pair of vertices $s$ and $t$,
we propose a pruning technique
that can reduce the search space significantly.
Moreover, it can be computed very efficiently.
\item
Then, we discuss how this technique
can be adjusted with exact and approximate algorithms, to improve their performance.
We show, for example, that under some conditions, applying this pruning technique
yields a polynomial time exact algorithm for enumerating $s-t$ simple paths.
\item
Finally, by performing extensive experiments
over several networks from different domains
(social, peer-to-peer, communication, citation,
stack exchange, product co-purchasing, ...),
we show high empirical efficiency
of our pruning technique.
\end{itemize}

Our algorithm is not a competitor of
existing $s-t$ simple path counting algorithms.
Rather, it is a friend as it can be
used as a fast pre-processing step,
before applying any of existing algorithms.

The rest of this paper is organized as follows.
In Section~\ref{sec:preliminaries},
we introduce
preliminaries and necessary definitions used in the paper.
In Section~\ref{sec:relatedwork}, we give an overview on related work.
In Section~\ref{sec:counting},
we present our pruning technique
and discuss how it improves exact and approximate algorithms.
In Section~\ref{sec:experimentalresults}, we empirically evaluate our pruning technique.
Finally, the paper is concluded in Section~\ref{sec:conclusion}.

\section{Preliminaries}
\label{sec:preliminaries}

We assume that the reader is familiar with basic concepts in graph theory.
Throughout the paper, $G$ refers to a directed graph.
For simplicity, we assume that $G$ is a connected and loop-free graph without multi-edges.
By default, we assume that $G$ is an unweighted graph,
unless it is explicitly mentioned that $G$ is weighted.
$V(G)$ and $E(G)$ refer to the set of vertices and the set of edges of $G$, respectively.
For a vertex $v \in V(G)$,
the number of head ends adjacent to $v$
is called its {\em in degree} and
the number of tail ends adjacent to $v$ is called its {\em out degree}.
For a vertex $v$, by $\mathcal{ON}_G(v)$ we denote the set of outgoing neighbors of $v$
in the graph $G$.
A (directed) {\em walk} in a directed graph is a sequence of edges directed in the same direction and joins a sequence of vertices.
A directed {\em trail} is a (directed) walk
wherein all the edges are distinct.
A (simple and directed) {\em path}
is a (directed) trail wherein all the vertices are distinct.
Let $A \subset V(G)$.
The subgraph of $G$ {\em induced} by $A$
is the graph whose vertices are $A$ and
edges are all the edges in $E(G)$ that have both endpoints in $A$.

%
%


\section{Related work}
\label{sec:relatedwork}

There exist a number of algorithms in the literature for counting/enumerating/listing $s-t$ simple paths.
In one of the earlier studies,
Valiant \cite{DBLP:journals/siamcomp/Valiant79}
showed that it is $\#$P-complete
to enumerate simple $s-t$ paths
in both directed and undirected graphs.
In \cite{HURA1983157},
Hura utilized {\em Petri nets} to enumerate all
$s-t$ simple paths in a directed graph.
Bax \cite{DBLP:journals/ipl/Bax94} exploited Hamiltonian path algorithms to derive an
$O\left(2^{|V(G)|} poly(|V(G)|)\right)$ time
and $O\left(poly(|V(G)|)\right)$ space algorithm
for $s-t$ simple path counting.
Knuth \cite{Knuth_4a} presented the {\em simpath} algorithm that generates a
(not-always-reduced)
binary decision diagram for all simple paths from $s$ to $t$.
The constructed diagram is called
{\em zero-suppressed decision diagram}
({\em ZDD}).
In recent years, more efficient algorithms
have been proposed to construct {\em ZDD} \cite{DBLP:conf/ambn/YasudaSM17}.
Roberts and Kroese
\cite{DBLP:journals/jgaa/RobertsK07}
presented an approximate algorithm for
$s-t$ simple path counting,
which is based on {\em sequential importance sampling} (we discuss
their algorithm in more details in Section~\ref{sec:approximate},
and explain how our pruning technique can improve it).
In \cite{DBLP:journals/mst/MihalakSW16},
Mihal{\'{a}}k et.al. studied
approximately counting approximately shortest paths in directed graphs.
However, their algorithm is restricted to
directed acyclic graphs (DAGs).

In the literature,
there are problems that are close to our studied problem,
or are a restricted form of it.
In the following, we briefly review them:
\begin{itemize}
\item
In a restricted form of our studied problem,
the number of $s-t$ simple paths of
a (fixed) length $l$ are counted.
Flum and Grohe
\cite{DBLP:journals/siamcomp/FlumG04} proved that
counting simple paths
of length $l$ on both directed and undirected graphs,
parameterized by $l$, is \#W[1]-complete.
Recently,
Giscard et.al.
\cite{DBLP:journals/algorithmica/GiscardKW19}
proposed an algorithm for counting
$s-t$ simple paths of length up to $l$.
Time complexity of their algorithm is
$O\left( |V(G)| + |E(G)| + (l^w+l\Delta) |S_l| \right)$,
where
$\Delta$ is the maximum degree of the graph,
$|S_l|$ is the number of weakly connected induced subgraphs of $G$ on at most $l$ vertices,
and $w$ is the exponent of matrix multiplication.
\item
The other problem which is close
to our studied problem is counting the number of all paths
from a vertex $s$ to another vertex $t$.
However, unlike our studied problem
(which is $\#$P-complete),
this problem
can be solved in polynomial time,
using e.g., dynamic programming
or fast matrix multiplication.
For example and as described in e.g., \cite{jrnl:Brandes},
the number of all paths of size $k$
(simple or non-simple)
from $s$ to $t$ is equal to the $st$-th entry
of $A^k$, where $A$ is the adjacency matrix of the graph
and $A^k$ is the $k$-th power of $A$.
Arenas et.al. \cite{DBLP:conf/pods/ArenasCJR19}
study the problem of counting
the number of all paths between two vertices $s$ and $t$
of a length at most $k$,
and show that it admits a fully polynomial-time randomized approximation scheme (FPRAS).
\item
Another relevant problem
is counting the number of shortest paths between
a given pair of vertices.
Unlike our studied problem,
this problem can
be solved efficiently in polynomial time, too.
Using breadth first search (BFS), this problem
can be solved in $O\left(|V(G)|+|E(G)|\right)$ time \cite{jrnl:Brandes}.
Recently, a number of faster approximate algorithms
have been proposed to address this problem
in static \cite{DBLP:journals/algorithms/MensahGY20}
and dynamic graphs \cite{DBLP:conf/cikm/TretyakovAGVD11}.
The notion of shortest paths is used to develop
several tools of network analysis, including
{\em betweenness centrality} \cite{DBLP:journals/cj/Chehreghani14,DBLP:journals/corr/abs-1708-08739,DBLP:conf/edbt/ChehreghaniAB19}.
\end{itemize}

In the current paper, we propose a pruning technique
to improve exact/approximate algorithms
of the {\em general form of $s-t$ simple path counting} in directed graphs.
Our algorithm is not a competitor of these
existing algorithms.
Rather, it is a friend that can be
used as a fast pre-processing step before applying any of them.

\section{Counting $s-t$ paths in directed graphs}
\label{sec:counting}

In this section,
given a source vertex $s$ and a target vertex $t$
in a directed graph $G$,
we propose algorithms to
count the number of all simple
paths from $s$ to $t$.
First in Section~\ref{sec:prunning},
we present a pruning technique
used to reduce search space for
$s-t$ path counting.
Then in Section~\ref{sec:exact},
we discuss how this pruning technique
improves exact $s-t$ path counting.
Finally in Section~\ref{sec:approximate},
we study how our pruning technique
can improve approximate algorithms.

\subsection{A pruning technique}
\label{sec:prunning}

In this section, we describe our pruning technique.
We start with introducing the sets
$\mathcal{S}_{st}$ (Definition~\ref{def:rf}) and
$\mathcal{IS}_{ts}$
(Definition~\ref{def:tf}),
that are used to reduce the search space.

\begin{definition}
\label{def:rf}
Let $G$ be a directed graph and
$s,t,v \in V(G)$ such that $s\neq t$.
We say $v$ is {\em in the scope of $s$ with respect to $t$} iff
either $v=s$, or there is
at least one directed path
from $s$ to $v$ in $G$ that
does not pass over $t$.  \footnote{The path may end to $t$, however, it can not pass over $t$.}
The set of vertices $v$ that are
{\em in the scope of $s$ with respect to $t$} is denoted with $\mathcal{S}_{st}$.
\end{definition}

\begin{definition}
\label{def:inverse}
Let $G$ be a directed graph.
{\em Inverse graph} of $G$,
denoted with $\mathcal I(G)$,
is a directed graph such that:
(i) $V(\mathcal I(G))=V(G)$, and
(ii) $(u,v) \in E(\mathcal I(G))$ if and only if $(v,u) \in E(G)$ \cite{DBLP:journals/corr/abs-1708-08739}.
\end{definition}

\begin{definition}
\label{def:tf}
Let $G$ be a directed graph and $s,t,v \in V(G)$ such that $s \neq t$.
We say $v$ is {\em inversely in the scope of $t$ with respect to $s$} iff
either $v=t$, or there is
at least one directed path from $t$ to $v$
in $\mathcal I(G)$
that does not pass over $s$  \footnote{The path may end to $s$, however, it can not pass over $s$.}.
The set of vertices $v$ that are {\em inversely in the scope of $t$ with respect to $s$} is denoted with $\mathcal{IS}_{ts}$.
\end{definition}

For example, in the graph of Figure~\ref{fig:approximate},
$\mathcal{S}_{st}$ consists of vertices
$s$, $v_1$, $v_2$, $v_3$, $v_4$, $v_5$ and
$t$.
It does not contain vertex $v_6$,
as the path from $s$ to $v_6$
passes over $t$.
Moreover,
$\mathcal{IS}_{ts}$ consists of vertices
$t$, $v_5$ and $s$.
For a given pair of vertices $s,t \in V(G)$,
we can compute $\mathcal S_{st}$
and $\mathcal {IS}_{ts}$
efficiently in $O(|E(G)|)$ time.
To compute $\mathcal{S}_{st}$,
we act as follows:
\begin{enumerate}
\item
First, if $G$ is weighted,
weights of the edges of $G$
are discarded,
\item
Then, a (revised) breadth first search (BFS) or a depth-first search (DFS) on $G$
starting from $s$ is conducted,
with an small change:
when $t$ is met,
the traversal is not expanded to
the children (and descendants) of $t$.
All the vertices that are met during the traversal are appended to $\mathcal{S}_{st}$.
\end{enumerate}

To compute $\mathcal{IS}_{ts}$, we act as follows:
\begin{enumerate}
\item
First, by flipping the direction of the edges of $G$,
we construct $\mathcal I(G)$,
\item
Then, if $G$ is weighted,
weights of the edges are discarded,
\item
Finally, a (revised) breadth-first search or
depth-first search on $\mathcal I(G)$
starting from $t$ is conducted,
with an small change:
when $s$ is met, the traversal is not
expanded to its children (and descendants).
All the vertices that are met during the traversal are appended to
$\mathcal{IS}_{ts}$.
\end{enumerate}

It is easy to see that both
$\mathcal{S}_{st}$ and $\mathcal{IS}_{ts}$ can be computed in $O(|E(G)|)$ time,
for both unweighted and weighted graphs.
Furthermore, we have the following lemma.

\begin{lemma}
\label{lemma:exactbc}
Given a directed graph $G$ and
vertices $s,t \in V(G)$,
the number of simple paths from $s$ to $t$
is equal to
the number of simple paths from
$s$ to $t$ whose vertices belong to
$\mathcal{S}_{st} \cap \mathcal{IS}_{ts}$.
\end{lemma}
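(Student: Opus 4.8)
The plan is to show a bijection (in fact, an equality of sets) between the collection of all simple $s$--$t$ paths in $G$ and the collection of simple $s$--$t$ paths all of whose vertices lie in $\mathcal{S}_{st} \cap \mathcal{IS}_{ts}$. Since the latter collection is clearly a subset of the former, it suffices to prove the reverse inclusion: every simple $s$--$t$ path $P$ uses only vertices from $\mathcal{S}_{st} \cap \mathcal{IS}_{ts}$.

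First I would fix an arbitrary simple path $P = (s = u_0, u_1, \dots, u_k = t)$ in $G$ and an arbitrary internal vertex $u_i$ with $0 \le i \le k$. To see $u_i \in \mathcal{S}_{st}$: if $u_i = s$ we are done by definition; otherwise the prefix $(u_0, \dots, u_i)$ is a directed path from $s$ to $u_i$ in $G$. Because $P$ is \emph{simple} and $t = u_k$ with $k \ge i$ can only appear at position $k$, the vertex $t$ does not occur among $u_0, \dots, u_{i-1}$; hence if $i < k$ the prefix does not pass over $t$ at all, and if $i = k$ then $u_i = t \in \mathcal{S}_{st}$ trivially. Either way $u_i \in \mathcal{S}_{st}$. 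Symmetrically, for $u_i \in \mathcal{IS}_{ts}$: if $u_i = t$ done; otherwise the reversed suffix $(u_k, u_{k-1}, \dots, u_i)$ is a directed path from $t$ to $u_i$ in $\mathcal{I}(G)$ (each edge $(u_{j+1}, u_j)$ lies in $E(\mathcal{I}(G))$ since $(u_j, u_{j+1}) \in E(G)$), and by simplicity of $P$ the vertex $s = u_0$ does not occur among $u_k, \dots, u_{i+1}$ when $i > 0$, so this path does not pass over $s$; when $i = 0$ we have $u_i = s \in \mathcal{IS}_{ts}$ trivially. Hence every vertex of $P$ lies in $\mathcal{S}_{st} \cap \mathcal{IS}_{ts}$.

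From the inclusion just established, the set of all simple $s$--$t$ paths coincides with the set of those whose vertex set is contained in $\mathcal{S}_{st} \cap \mathcal{IS}_{ts}$, and in particular the two have the same cardinality, which is the claim. I would also remark that a simple path through vertices of $\mathcal{S}_{st} \cap \mathcal{IS}_{ts}$ is automatically a path in the subgraph induced by that vertex set, so equivalently one may count simple $s$--$t$ paths in that induced subgraph.

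I do not expect a genuine obstacle here; the argument is essentially a careful bookkeeping of where $s$ and $t$ can appear along a simple path. The one point requiring slight care is the boundary behavior at the endpoints (the cases $i = 0$ and $i = k$) and the precise reading of the footnoted convention ``the path may end at $t$ but may not pass over $t$'' in Definition~\ref{def:rf} (and the analogous convention for $s$ in Definition~\ref{def:tf}); the simplicity of $P$ is exactly what guarantees the prefix/suffix respects that convention, so the proof hinges on invoking simplicity at the right moment rather than on any nontrivial computation.
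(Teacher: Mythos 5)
Your proof is correct and is essentially the paper's own argument, just spelled out in full: the paper's proof is the one-line observation that every simple $s$--$t$ path witnesses, via its prefixes and (reversed) suffixes, that each of its vertices lies in $\mathcal{S}_{st}$ and in $\mathcal{IS}_{ts}$. Your careful handling of the endpoint cases and of the ``may end at but not pass over'' convention is exactly the bookkeeping the paper leaves implicit.
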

\begin{proof}
Each path should start from $s$ and end with $t$,
therefore its vertices should belong to both
$\mathcal{S}_{st}$ and
$\mathcal{IS}_{ts}$.
\end{proof}

Lemma~\ref{lemma:exactbc} says that
in order to compute $s-t$ paths,
we require to consider only those paths
that their vertices belong to both
$\mathcal{S}_{st}$ and
$\mathcal{IS}_{ts}$.
We can use this to prune many vertices of the graph that do not belong to
either $\mathcal{S}_{st}$ or
$\mathcal{IS}_{ts}$ or both.
As a result, if in a graph,
$\mathcal{S}_{st} \cap \mathcal{IS}_{ts}$
is (considerably) smaller than the number of vertices of the graph,
this pruning technique
can enormously improve the efficiency
of the $s-t$ path counting algorithm.
Note that time complexity of
$s-t$ simple path counting algorithms
is exponential in terms of the number
of vertices of the graph.
Hence, discarding a considerable part of the vertices can hugely improve
the efficiency of the algorithms.

In Algorithm~\ref{alg:framework},
we use this optimization technique
to make $s-t$ path counting algorithms
more efficient.
If $out\_degree$ of $s$ is $0$ or
$in\_degree$ of $t$ is $0$,
there is no path from $s$ to $t$, hence,
Algorithm~\ref{alg:framework} returns $0$.
Then, it computes $\mathcal{S}_{st}$
and $\mathcal{IS}_{ts}$ and stores them respectively
in the sets $S$ and $IS$.
Then, for each vertex $v$ in the graph,
we check if it belongs to both $S$ and $IS$.
We form the subgraph induced by all such vertices and
call it $\rho_{st}(G)$.
Finally, we apply the $s-t$ path counting algorithm
on $\rho_{st}(G)$.
the graph $\rho_{st}(G)$ is usually much smaller than $G$, therefore,
the algorithm can be run much faster.

\begin{figure}
\caption{\label{fig:approximate}
The graph on left
is a directed graph $G$ and
the graph on right
shows $\rho_{st}(G)$.
Over $G$ and $s$ and $t$,
while the variance of our algorithm
is $0$, the variance of the algorithm
of \cite{DBLP:journals/jgaa/RobertsK07} is $4$.}
\centering
\subfigure[A graph $G$]
{
\includegraphics[scale=0.35]{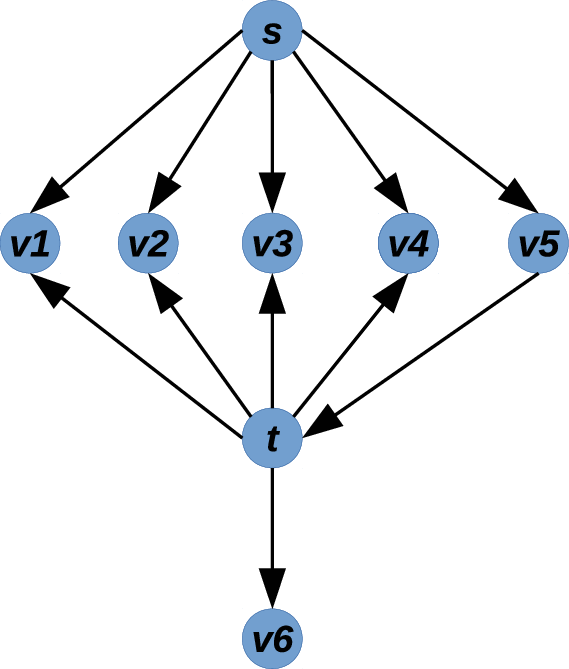}
\label{fig:approximate_1}
}
\hspace{4em}
\subfigure[$\rho_{st}(G)$]
{
\includegraphics[scale=0.35]{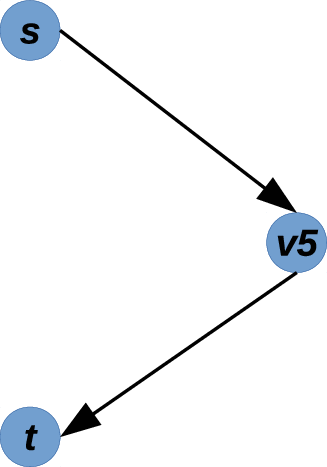}
\label{fig:approximate_2}
}
\end{figure}

\begin{algorithm}
\caption{High level pseudo code of
computing the number of $s-t$ paths in a directed graph.}
\label{alg:framework}
\begin{algorithmic} [1]
\STATE \textbf{Input.} A directed network $G$,
a pair of vertices $s,t \in V(G)$.
\STATE \textbf{Output.}
The number of simple paths from $s$ to $t$.
\IF{{\em out degree} of $s$ is $0$ or {\em in degree} of $t$ is $0$}
\RETURN $0$.
\ENDIF
\STATE $S \leftarrow$ compute
$\mathcal{S}_{st}$.
\STATE $IS \leftarrow$ compute $\mathcal{IS}_{ts}$.
\FORALL{vertices $v \in V(G)$}
\IF{$v \in S$ and $v \in IS$}
\STATE Mark $v$.
\ENDIF
\ENDFOR
\STATE $\rho_{st}(G) \leftarrow$ the induced subgraph
of $G$ consisting of the marked vertices.
\STATE $path\_no \leftarrow$ call an
(exact/approximate) $s-t$ path counting algorithm on
$\rho_{st}(G)$.
\RETURN $path\_no$.
\end{algorithmic}
\end{algorithm}

\subsection{Exact algorithm}
\label{sec:exact}

In an exact algorithm for counting $s-t$ simple paths that exhaustively enumerates all simple paths from $s$ to $t$, we can use {\em backtracking}:
we start from $s$, take a path
and walk it
(so that the path does not contain repeated vertices).
If the path ends to $t$, we count it and backtrack and take some other path.
If the path does not reach to $t$,
we discard it and take some other path.
This algorithm enumerates all the paths from $s$ to $t$.
We refer to this algorithm as
{\em Exhaustive Simple Path Enumerator}, or \textsf{ESPE} in short.
As discussed in \cite{DBLP:journals/siamcomp/Valiant79},
enumerating $s-t$ paths for both directed and
undirected graphs is $\#$P-complete.
In the following, we discuss that
using our pruning technique and
for a certain type of vertices in a directed graph,
the simple path enumeration problem can be solved in
{\em polynomial time}.

%

In the $s-t$ simple path counting problem,
the input size consists of
the number of vertices and the number of edges of $G$.
In Theorem~\ref{theorem:polynomial_delay}
we show that if for $s$ and $t$,
$\left|\mathcal S_{st} \cap \mathcal {IS}_{ts}\right|$
is a constant, then the $s-t$ simple counting problem
can be solved efficiently.
Later in Section~\ref{sec:experimentalresults}
we empirically show that in most of real-world
networks, vertices $s$ and $t$
have usually a small quantity for
$\left|\mathcal S_{st} \cap \mathcal {IS}_{ts}\right|$.
It should be highlighted that even for such pairs of vertices,
applying the above mentioned exhaustive simple path enumeration algorithm without
using our pruning technique,
{\em does not yield a  polynomial time algorithm}.

\begin{theorem}
\label{theorem:polynomial_delay}
Let $G$ be a directed graph, where
for a given pair of vertices $s$ and $t$,
$|V\left(\rho_{st}(G)\right)|)$ is a constant.
The list of simple paths
from $s$ to $t$ can be enumerated in polynomial time
(in terms of $|V(G)|$ and $|E(G)|$).
\end{theorem}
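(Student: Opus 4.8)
The plan is to reduce the whole task, by a single application of the pruning of Section~\ref{sec:prunning}, to an enumeration problem on a graph of constant size, and then to observe that any computation on a constant-size graph costs constant time, so that the polynomial bound is driven entirely by the cost of building $\rho_{st}(G)$.

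First I would run the pruning part of Algorithm~\ref{alg:framework}: compute $\mathcal{S}_{st}$ and $\mathcal{IS}_{ts}$ (each by one revised BFS/DFS, in $O(|V(G)|+|E(G)|)$ time, as already noted in Section~\ref{sec:prunning}) and form the induced subgraph $\rho_{st}(G)$ on $\mathcal{S}_{st}\cap\mathcal{IS}_{ts}$, which costs another $O(|E(G)|)$ to collect the edges with both endpoints marked. Next I would invoke the exhaustive enumerator \textsf{ESPE} on $\rho_{st}(G)$ instead of on $G$. The correctness of this substitution is a slightly sharpened reading of Lemma~\ref{lemma:exactbc}: the proof of that lemma shows that every simple $s$-$t$ path of $G$ has all of its vertices in $\mathcal{S}_{st}\cap\mathcal{IS}_{ts}$, hence is a path of the induced subgraph $\rho_{st}(G)$; conversely, since $\rho_{st}(G)$ is an induced subgraph of $G$, every simple $s$-$t$ path of $\rho_{st}(G)$ is one of $G$. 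Thus the two lists of simple $s$-$t$ paths are literally identical (not merely equinumerous), so enumerating one enumerates the other.

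It then remains to bound the cost of \textsf{ESPE} on $\rho_{st}(G)$. Write $c=|V(\rho_{st}(G))|$, a constant by hypothesis. A simple $s$-$t$ path is an ordered sequence of distinct vertices, so there are at most $\sum_{k=2}^{c} c!/(c-k)! \le e\cdot c!$ of them, a constant; and the backtracking search tree explored by \textsf{ESPE} has depth at most $c$, at most that many leaves, and $O(c)$ work per node, so the enumeration runs in time bounded by a function of $c$ alone, i.e.\ in constant time. Adding the $O(|V(G)|+|E(G)|)$ preprocessing yields a total running time of $O(|V(G)|+|E(G)|)$, which is polynomial in the input size. (If the out-degree of $s$ is $0$, or the in-degree of $t$ is $0$, or more generally there is no simple $s$-$t$ path, the algorithm returns the empty list at once, so we may assume such a path exists; this also guarantees $s,t\in V(\rho_{st}(G))$, since $s\in\mathcal{S}_{st}$ by definition, the reverse of a simple $s$-$t$ path is a path from $t$ to $s$ in $\mathcal{I}(G)$ that does not pass over $s$, giving $s\in\mathcal{IS}_{ts}$, and symmetrically $t\in\mathcal{S}_{st}\cap\mathcal{IS}_{ts}$.)

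The only genuinely non-routine point — and the one I would be most careful to state cleanly — is the upgrade of Lemma~\ref{lemma:exactbc} from \emph{equal counts} to \emph{identical sets of paths}, together with the (easy) check that $s$ and $t$ themselves survive the pruning; everything after that is just the trivial observation that a constant-size instance is solved in constant time, and the emphasis should be on the fact that "constant" here means fixed with respect to the true input parameters $|V(G)|$ and $|E(G)|$.
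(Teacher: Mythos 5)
Your proposal is correct and follows essentially the same route as the paper's own proof: build $\rho_{st}(G)$ in $O(|E(G)|)$ time via the two pruning traversals, then observe that enumeration on a constant-size graph takes constant time. The extra care you take --- upgrading Lemma~\ref{lemma:exactbc} from equal counts to identical path sets and checking that $s$ and $t$ survive the pruning --- fills in details the paper leaves implicit, but does not change the argument.
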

\begin{proof}
We can compute the sets
$\mathcal{S}_{st}$,
$\mathcal{IS}_{ts}$,
and their intersection in
$O(|E(G)|)$ time.
So, $\rho_{st}(G)$ can be computed in
$O(|E(G)|)$ time.
After that, we work with $\rho_{st}(G)$
which has only a constant number of vertices.
Hence, all the paths can be enumerated
in a constant time.
\end{proof}

In fact,
the proof of Theorem~\ref{theorem:polynomial_delay} presents a much better result than polynomial time:
it says that under the mentioned condition
and after a linear time spent to
construct $\rho_{st}(G)$,
the whole list can be enumerated in a
constant time (in terms of $|V(G)|$ and $|E(G)|$).

\subsection{Approximate algorithm}
\label{sec:approximate}

Roberts and Kroese \cite{DBLP:journals/jgaa/RobertsK07} proposed
randomized algorithms for estimating
the number of $s-t$ simple paths in a graph.
In this section, we investigate how the first randomized algorithm of \cite{DBLP:journals/jgaa/RobertsK07} can be improved
using our pruning technique \footnote{The second randomized algorithm
of \cite{DBLP:journals/jgaa/RobertsK07}
and most of the other randomized algorithms
in the literature can be improved in a similar way.}.
First, we briefly describe
how this algorithm changes if we apply
our pruning technique on it.
Then, we analyze this revised algorithm and
compare it with the first algorithm of \cite{DBLP:journals/jgaa/RobertsK07}.

In the approximate algorithm, we sample $N$ independent (random) paths
and compute the probability of sampling each path.
The following procedure is used to sample each path:
\begin{enumerate}
\item
Start with vertex $s$.
Initialize the following variables:
$c \leftarrow s$ (current vertex),
$h \leftarrow 1$ (probability),
$k \leftarrow 1$ (counter), and
$p \leftarrow \{s\}$ (path).
\item
Mark $s$ as $visited$ (to be sure that $p$ will not visit $s$ again).
\item
Let $V'= \{v \in \mathcal{ON}_{\rho_{st}(G)}(c) | v \text{ is not } visited\}$
be the set of possible vertices for the next step of the path.
\item
Choose the next vertex $v \in V'$ of the path uniformly at random
and set $x_{k+1}$ to $v$.
\item
Set $c \leftarrow v$, $h \leftarrow \frac{h}{|V'|}$,  $k \leftarrow k+1$, and $p \leftarrow p \cup \{v\} $.
Mark $v$ as $visited$.
\item
If $c=t$, then stop.
Otherwise go to step $3$.
\end{enumerate}

If we do not use our pruning technique,
instead of working with $\rho(G)$,
we should work (e.g., in step 3) with $G$.
This has consequences.
For example, while a path sampled
from $G$ may never reach to $t$
(so after some iterations, $V'$ becomes empty),
this never happens when
we work with $\rho(G)$.
Hence, as stated in \cite{DBLP:journals/jgaa/RobertsK07},
when working with $G$ we need to check the
following condition in step 3:
if $V'= \emptyset$,
we do not generate a valid path, so stop.

Let $x_1,x_2,\ldots,x_m$
be the set of vertices of each path $p$,
with $x_1=s$.
In the above mentioned procedure,
path $p$ is chosen with probability:
\begin{align}
\label{eq:prob}
h(p) =& h(x_2| x_1 ) \cdot h(x_3|x_1,x_2 ) \ldots
h(x_m|x_1,\ldots, x_{m-1}) \nonumber \\
     =& \left|\mathcal{ON}_{\rho_{st}(G)}(x_1) \setminus \{x_1\}\right| \cdot
     \left|\mathcal{ON}_{\rho_{st}(G)}(x_2) \setminus \{x_1,x_2\}\right| \ldots \nonumber \\
      & \left|\mathcal{ON}_{\rho_{st}(G)}(x_{m-1}) \setminus \{x_1,\ldots,x_{m-1}\}\right|
\end{align}

Let $\mathcal X$ be the set of simple paths from $s$ to $t$.
Let also $p^1,\ldots,p^N$ be the set of sampled paths.
For each
$p^k$, we estimate the number of
$s-t$ simple paths as $X^k=\frac{1}{h(p^k)}$.
The final estimation is the average of
estimations of different samples:
\begin{equation}
X = \frac{1}{N} \sum_{k=1}^N \frac{1}{h(p^k)}.
\end{equation}

It is easy to see that $X$ is an unbiased estimator
for the number of simple paths from $s$ to $t$:
$$\mathbb{E}[X^k]=\sum_{p \in \mathcal X} \frac{1}{h(p)} \cdot h(p) = \sum_{p \in \mathcal X} 1 = |\mathcal X|.$$
Since $X$ is the average of $N$ random variables
whose expected values is $|\mathcal X|$,
the expected value of $X$ is $|\mathcal X|$, too.

For the variance of $X$, we have:
\begin{align*}
\mathbb Var \left[ X^k \right] &=
\mathbb E\left[(X^k)^2 \right] -
\mathbb E\left[X^k\right]^2 \nonumber  \\
   &= \sum_{p \in \mathcal X} \frac{1}{h(p)^2} \cdot h(p) - \mathcal X^2
   = \sum_{p \in \mathcal X}\frac{1}{h(p)} - \mathcal X^2 .
\end{align*}
Since $X$ is the average of $N$ independent
random variable $X^1,\ldots, X^N$,
we have:
\[\mathbb Var \left[ X\right]
= \frac{1}{N} \sum_{p \in \mathcal X} \frac{1}{h(p)}
- \frac{\mathcal X^2}{N}.\]

Now let compare this variance with the variance of
the estimator of the algorithm of \cite{DBLP:journals/jgaa/RobertsK07}.
Unlike our algorithm, in the algorithm of \cite{DBLP:journals/jgaa/RobertsK07}
it is possible that a sampled path does not belong to
$\mathcal X$.
Let $\mathbf{X}$ be the sample space of
the algorithm of \cite{DBLP:journals/jgaa/RobertsK07} that consists of those paths
in $G$ that start from $s$ and end to $t$ or to a vertex whose $V'$ is empty.
In a way similar to our analysis, its variance is:
\begin{align}
\label{eq:variance2}
\mathbb Var \left[ X\right] &= \frac{1}{N}
\sum_{p \in \mathbf X} \frac{I(p \in \mathbf X)}{h'(p)}
- \frac{\mathcal X^2}{N} \nonumber \\
&= \frac{1}{N}
\sum_{p : p \in \mathcal X} \frac{1}{h'(p)}
- \frac{\mathcal X^2}{N},
\end{align}
where
$h'(p)$ is defined as follows
($x_1,\ldots,x_m$ are vertices of $p$):
\begin{align}
\label{eq:prob2}
h'(p) =& \left|\mathcal{ON}_{G}(x_1) \setminus \{x_1\}\right| \cdot
     \left|\mathcal{ON}_{G}(x_2) \setminus \{x_1,x_2\}\right| \ldots \nonumber \\
      & \left|\mathcal{ON}_{G}(x_{m-1}) \setminus \{x_1,\ldots,x_{m-1}\}\right|.
\end{align}
Moreover in Equation~\ref{eq:variance2},
$I(p \in \mathcal X)$ is an indicator
function, which is $1$ if
$p$ is a path from $s$ to $t$,
and $0$ otherwise.

Comparing Equation~\ref{eq:prob} with
Equation~\ref{eq:prob2}
reveals that our algorithm has a better (lower) variable.
In order to have a lower variance,
an algorithm should assign as low as possible probabilities to the paths that
it may sample but they
do not belong
to $\mathcal X$.
Because in this case their inverse will be a
very large number and the contribution of these large numbers to the variance will be $0$.
Our algorithm assigns probability $0$ to such
paths, so it manages them very efficiently.
However, the algorithm of \cite{DBLP:journals/jgaa/RobertsK07}
may assign a (large) non-zero value to many of them.
For example, in Figure~\ref{fig:approximate}
assume that we want to estimate the number of simple paths from vertex $s$ to vertex $t$.
The sample space of our algorithm
consists of only only one path which is the path
$s \rightarrow v_5 \rightarrow t$.
Hence, the variance of its estimation,
even when $N$ is $1$, is $0$.
In contrast, the sample space of the algorithm
of \cite{DBLP:journals/jgaa/RobertsK07}
consists of $5$ paths:
$s \rightarrow v_1$,
$s \rightarrow v_2$,
$s \rightarrow v_3$,
$s \rightarrow v_4$, and
$s \rightarrow v_5 \rightarrow t$,
where it assigns an equal probability $1/5$ to each path.
Among them, only the last one belongs to $\mathcal X$.
While this algorithm assigns the same
probability to each one, our algorithm gives probability $0$ to the paths that do not belong
to $\mathcal X$.
With $N=1$, the variance of the estimation of
the algorithm of \cite{DBLP:journals/jgaa/RobertsK07}
is $4$.
Since our algorithm assigns probability $0$
to the paths that are not in $\mathcal X$,
it never finds a variance worse than the algorithm of \cite{DBLP:journals/jgaa/RobertsK07}.

\begin{table*}[!htb]
\caption{\label{table1}
Specifications of our large networks.}
\begin{center}
\resizebox{\textwidth}{!}{
\begin{tabular}{ l | l l l }
\hline
Dataset  & \#vertices & \#edges &  domain \\
\hline
soc-sign-Slashdot090221 \cite{DBLP:conf/chi/LeskovecHK10}   & {82140} & 549202 & social \\
soc-sign-epinions \cite{DBLP:conf/chi/LeskovecHK10} & {131828} & 841372 &  social \\
p2p-Gnutella08 \cite{DBLP:journals/tkdd/LeskovecKF07}   & {6301} & 20777  & peer-to-peer file sharing \\
p2p-Gnutella30 \cite{DBLP:journals/tkdd/LeskovecKF07} & 36682 & 88328 & peer-to-peer file sharing \\
p2p-Gnutella31 \cite{DBLP:journals/tkdd/LeskovecKF07}  & 62586 & 147892 & peer-to-peer file sharing \\
wikiVote  \cite{DBLP:conf/chi/LeskovecHK10}    & {7115} & 103689 & Wikipedia vote \\
collegeMsg     \cite{DBLP:journals/jasis/PanzarasaOC09} & {1899} & 59835  & message exchanging \\
soc-Epinions1 \cite{DBLP:conf/semweb/RichardsonAD03}   & {75879} & 508837 & who-trust-whom \\
email-EuAll   \cite{DBLP:journals/tkdd/LeskovecKF07}  & {265214} & 420045 & communication \\
cit-HepPh   \cite{DBLP:conf/kdd/LeskovecKF05}     & {34546} & 421578 & citation \\
cit-HepTh   \cite{DBLP:conf/kdd/LeskovecKF05}      & {27770} & 352807 & citation \\
sx-mathoverflow  \cite{DBLP:conf/wsdm/ParanjapeBL17}  & {24818} & 506550 & stack exchange (Math Overflow) \\
sx-askubuntu  \cite{DBLP:conf/wsdm/ParanjapeBL17}     & {159316} & 964437 & stack exchange (Ask Ubuntu) \\
sx-superuser  \cite{DBLP:conf/wsdm/ParanjapeBL17}    & {194085} & 1443339 & stack exchange (Super User) \\
amazon0302    \cite{DBLP:journals/tweb/LeskovecAH07}    & {262111} & 1234877 &  product co-purchasing \\
amazon0312    \cite{DBLP:journals/tweb/LeskovecAH07}    & {400727} & 3200440 & product co-purchasing \\
\hline
\end{tabular}
}
\end{center}
\end{table*}

%
%
%
%
%

\section{Experimental results}
\label{sec:experimentalresults}

In order to evaluate the empirical behaviour of our pruning technique,
we perform extensive experiments
over several real-world networks
from different domains.
The objective is to empirically show that
in real-world directed networks,
for a given pair of vertices $s$ and $t$,
$\mathcal{S}_{st} \cap \mathcal{IS}_{ts}$
is small (compared to $V(G)$),
and it can be computed very efficiently.
These two automatically improve efficiency
of any (exact/approximate) simple path
counting algorithm.

\begin{table}[!htb]
\caption{\label{table2}Empirical evaluation
of our proposed pruning technique.
Column $max_{sz}$
presents the maximum size of
$\mathcal{S}_{st} \cap \mathcal{IS}_{ts}$
divided by the number of vertices
(over different pairs of $s,t$ vertices).
Column $max_{tm}$ presents
the maximum time to compute
$\mathcal{S}_{st} \cap \mathcal{IS}_{ts}$,
where the reported times are in second(s).
}
\begin{center}
\begin{tabular}{ p{5cm} | p{5cm}  p{1.5cm} }
\hline
Dataset & $max_{sz}$ & $max_{tm}$ \\
\hline
soc-sign-Slashdot090221 &   0.3333 & 0.0889  \\
soc-sign-epinions &  0.3144 & 0.1172 \\
p2p-Gnutella08    &   0.3296 &  0.0014 \\
p2p-Gnutella31    &    0.2261 & 0.0150 \\
p2p-Gnutella30    &    0.2315 & 0.0081 \\
wikiVote          &   0.1570 & 0.0068\\
collegeMsg        &   0.6821 & 0.0004 \\
email-EuAll       &  0.1289 &  0.0618 \\
soc-Epinions1     &  0.4246 &  0.0761 \\
cit-HepPh         &  0.0000002 & 0.2833 \\
cit-HepTh         &  0.0000002 & 0.2765 \\
sx-mathoverflow   &  0.1478 &  0.0495 \\
sx-askubuntu      &  0.1160 &  0.1399 \\
sx-superuser      &  0.1477 &  0.2418 \\
amazon0302        &  0.9229 &   0.2255 \\
amazon0312        &  0.9488 &   0.5525 \\
\hline
\end{tabular}
\end{center}
\end{table}

\begin{table*}[!htb]
\caption{\label{table4}
Specifications of our smaller networks.}
\begin{center}
\begin{tabular}{ l | l l l }
\hline
Dataset & \#vertices & \#edges &  domain \\
\hline
bio-CE-GN \cite{DBLP:conf/aaai/RossiA15} & 2220 & 53683 & Gene functional associations \\
bio-SC-LC \cite{DBLP:conf/aaai/RossiA15} & 2004 & 20452 & Gene functional associations \\
bn-mouse\_retina\_1 \cite{DBLP:conf/aaai/RossiA15} & 1123 & 90811 & Brain network  \\
C1000-9 \cite{DBLP:conf/aaai/RossiA15}  & 1001 & 450081 & DIMACS \\
C2000-9 \cite{DBLP:conf/aaai/RossiA15}  & 2001 & 1799534 & DIMACS \\
SmaGri  \cite{DBLP:conf/aaai/RossiA15}  & 1060 & 4921 & Citation network \\
\hline
\end{tabular}
\end{center}
\end{table*}

\begin{table*}[!htb]
\caption{\label{table5}Empirical comparison
of our proposed pruning technique.
All the reported times are in seconds
and "not terminated" means the algorithm does not terminate
within 12 hours!
}
\begin{center}
\begin{tabular}{ l | c c c  c }
\hline
Dataset & max $\left|\mathcal{S} \cap \mathcal{IS}\right|$ & \multicolumn{2}{c}{max total running time} & max $\#$simple paths \\
\cline{3-4}
&   & \textsf{pruned-ESPE} &  \textsf{ESPE} &  \\
\hline
bio-CE-GN  &   135 & \textbf{5.253}   & not terminated & 1,180,840   \\
bio-SC-LC &   144 & \textbf{414.605} & not terminated & 119,738,817 \\
bn-mouse\_retina\_1  & 29 & \textbf{183.987} & not terminated & 17,012,976 \\
C1000-9   & 35    & \textbf{195.333} & 448.225 & 919,343,373     \\
C2000-9  & 34    & \textbf{171.182} & 253.073 & 887,306,870  \\
SmaGri  & 52    & \textbf{0.001} & 5.973 & 1,030   \\
\hline
\end{tabular}
\end{center}
\end{table*}

The experiments are done on an Intel
processor clocked at $2.4$ GHz with $3$ GB
main memory, running Ubuntu Linux $18.04.2$ LTS.
We perform our tests over 16 real-world datasets
from the SNAP repository\footnote{\url{https://snap.stanford.edu/data/}}.
The specifications of the datasets
are summarized in Table~\ref{table1}.
Over each dataset,
we choose $500$ pairs of vertices $s$ and $t$,
uniformly at random.
Then, we run our proposed pruning technique
for each pair.
In the end, we report maximum size of
$\mathcal{S}_{st} \cap \mathcal{IS}_{ts}$
divided by $|V(G)|$
(in the $max_{sz}$ column)
and the maximum time to compute
$\mathcal{S}_{st} \cap \mathcal{IS}_{ts}$
(in the $max_{tm}$ column),
over all the sampled pairs.

The results are reported in Table~\ref{table2}.
On the one hand,
since $max_{sz}$ is usually considerably smaller than the number of vertices of the graph, our pruning technique usually makes
the real-world graphs considerably smaller.
Hence, any $s-t$ path counting problem
(exact, approximate, ...) can be solved
much faster on the smaller graph.
Compared to the number of vertices,
we may consider $\mathcal{S}_{st} \cap \mathcal{IS}_{ts}$
as a constant.
Therefore after pruning,
the $s-t$ path counting algorithm finds a constant time.
Note that the reported values for
$|\mathcal{S}_{st} \cap \mathcal{IS}_{ts}|/|V(G)|$
are the maximum ones and most of the other values are much smaller.

On the other hand,
the values reported in the $max_{tm}$ column
indicate that computing
$\mathcal{I}_{st}$,
$\mathcal{IS}_{ts}$, and their intersection
can be done very quickly.
Note that $max_{tm}$ contains
all the time required to compute
$\mathcal{S}_{st}$,
$\mathcal{IS}_{ts}$, and their intersection.
As reported in the table,
over all the datasets,
$max_{tm}$ is less than $0.6$ seconds!
This time is quite ignorable,
compared to the time of counting all $s-t$ simple paths.
Note that while worst case time complexity
of our pruning technique is $O(|E(G)|)$,
in practice it is usually
performed much faster.
The reason is that for most of vertices,
the direction of edges
makes the sets $\mathcal{S}$ and $\mathcal{IS}$ small.
As a result, BFSs (or DFSs)
conducted on these small sets,
can be conducted very efficiently.

In our experiments,
we tried to compare the \textsf{ESPE} algorithm
(discussed in Section~\ref{sec:exact}),
against the method wherein first we apply our pruning technique and then we use \textsf{ESPE}
(we refer to this method as \textsf{pruned-ESPE}).
However, over any of the above mentioned networks,
none of the methods \textsf{ESPE} and \textsf{pruned-ESPE}
finish during a reasonable time (24 hours!).
Therefore and to compare the efficiency of \textsf{pruned-ESPE}
against \textsf{ESPE}, we switch to smaller graphs.
Our smaller real-world graphs have between 1K-2.3K vertices.
Their specifications are summarized in Table~\ref{table4}
(they can be downloaded from \url{http://networkrepository.com/networks.php}). We treat all these networks as directed graphs.

When comparing \textsf{pruned-ESPE} against \textsf{ESPE},
we perform the following scenario:
over each dataset, we sample uniformly at random 500 pairs of nodes $s$ and $t$, such that there exists at least one path from $s$ to $t$.
Then for each pair, we run \textsf{pruned-ESPE} and \textsf{ESPE}
and compute the following statistics:
$\left|\mathcal{S} \cap \mathcal{IS}\right|$,
total running time of \textsf{pruned-ESPE}
(which includes the pre-processing/pruning phase),
running time of \textsf{ESPE}, and the number of simple paths.
In the end and among all the sampled pairs,
we {\em choose} the pair that yields
the maximum number of simple paths
and report the results of this {\em chosen} pair.
For both algorithms,
this {\em chosen} pair usually takes the most amount of time
to process.
Thus, "max $\left|\mathcal{S} \cap \mathcal{IS}\right|$"
depicts the value of $\left|\mathcal{S} \cap \mathcal{IS}\right|$
for the {\em chosen} pair, "max total running time" depicts
the total time to process the {\em chosen} pair (by each of the algorithms),
and "max \#simple paths" presents the number of simple paths between
the two nodes of the {\em chosen} pair.

Table~\ref{table5} reports the results.
As seen in the table, using our proposed pruning techniques
significantly improves the running time.
All the times reported in this table are in seconds,
and "not terminated" means the algorithm does not terminate
within 12 hours!
So, for the {\em chosen} pair and over bio-CE-GN, bio-SC-LC and bn-mouse\-retina\-1, while \textsf{ESPE} doe not produce the results
during a reasonable time (12 hours!),
\textsf{pruned-ESPE} processes the {\em chosen} pairs quickly.
Over the other graphs that both algorithms terminate within a reasonable time, \textsf{pruned-ESPE} acts quite faster.
As evidenced by the values of the "max $\left|\mathcal{S} \cap \mathcal{IS}\right|$" column,
this is due to the efficiency of our pruning technique in  considerably reducing the search space.

\section{Conclusion}
\label{sec:conclusion}

In this paper, we studied efficient
$s-t$ simple path counting in directed graphs. For a given pair of vertices $s$ and $t$
in a directed graph, first we presented a pruning technique that can efficiently reduce the search space.
Then, we investigated how this
pruning technique can be adjusted with exact and approximate algorithms, to improve their efficiency.
In the end, we performed extensive
experiments over several real-world networks to show the empirical efficiency of our proposed pruning technique.


 \bibliographystyle{plain}
\bibliography{allpapers}   

\end{document}